\def\eqref#1{equation~\ref{#1}}
\def\1{\bm{1}}
\DeclareMathAlphabet{\mathsfit}{\encodingdefault}{\sfdefault}{m}{sl}
\SetMathAlphabet{\mathsfit}{bold}{\encodingdefault}{\sfdefault}{bx}{n}
\theoremstyle{plain}
\newtheorem{theorem}{Theorem}[section]
\newtheorem{lemma}[theorem]{Lemma}
\theoremstyle{definition}
\newtheorem{assumption}[theorem]{Assumption}
\theoremstyle{remark}
\newtheorem{remark}[theorem]{Remark}
\definecolor{citecolor}{HTML}{114083}
\definecolor{linkcolor}{HTML}{ED1C24}
\title{MelTok:2D Tokenization for Single-Codebook Audio Compression}
\author{
Jingyi Li$^{1*}$, 
Zhiyuan Zhao$^{1*}$, 
Zhisheng Zhang$^{4}$, 
Yunfei Liu$^{1}$, 
Lijian Lin$^{1}$, 
Ye Zhu$^{1}$, 
Jiahao Wu$^{1,3}$, 
Qiuqiang Kong$^{2}$, 
Yu Li$^{1\dagger}$ \\
$^{1}$ International Digital Economy Academy (IDEA) \\
$^{2}$ Chinese University of Hong Kong \\
$^{3}$ Shenzhen Graduate School, Peking University \\
$^{4}$ Shenzhen International Graduate School, Tsinghua University \\
\texttt{kimberlylee200106@gmail.com} \\
* Equal contribution \quad \dagger\ Corresponding author
}
\begin{document}

\maketitle
\begin{abstract}
Large Audio Language Models (LALMs) have emerged with strong performance across diverse audio understanding tasks and can be further enhanced by neural audio codecs. Transitioning from multi-layer residual vector quantizers to a single-layer quantizer has been shown to facilitate more efficient downstream language models decoding. However, the ability of a single codebook to capture fine-grained acoustic details remains limited, as the frequency-variant nature of 1D tokenizers leads to redundancy. To address this issue, we propose MelTok, a two-dimensional (2D) tokenizer that effectively compresses acoustic details of 44.1 KHz audio into a single codebook. The tokenizer encodes audio into a more compact representation than one-dimensional tokenizers. Furthermore, to recover audio from mel-spectrogram tokens, we propose a token-based vocoder. Both objective and subjective evaluations demonstrate that MelTok achieves quality comparable to multi-codebook codecs and outperforms existing state-of-the-art neural codecs with a single codebook on high-fidelity audio reconstruction. By preserving acoustic details, MelTok offers a strong representation for downstream understanding tasks.
\end{abstract}
\begin{figure}[H]
    \centering
    \includegraphics[width=1.0\linewidth]{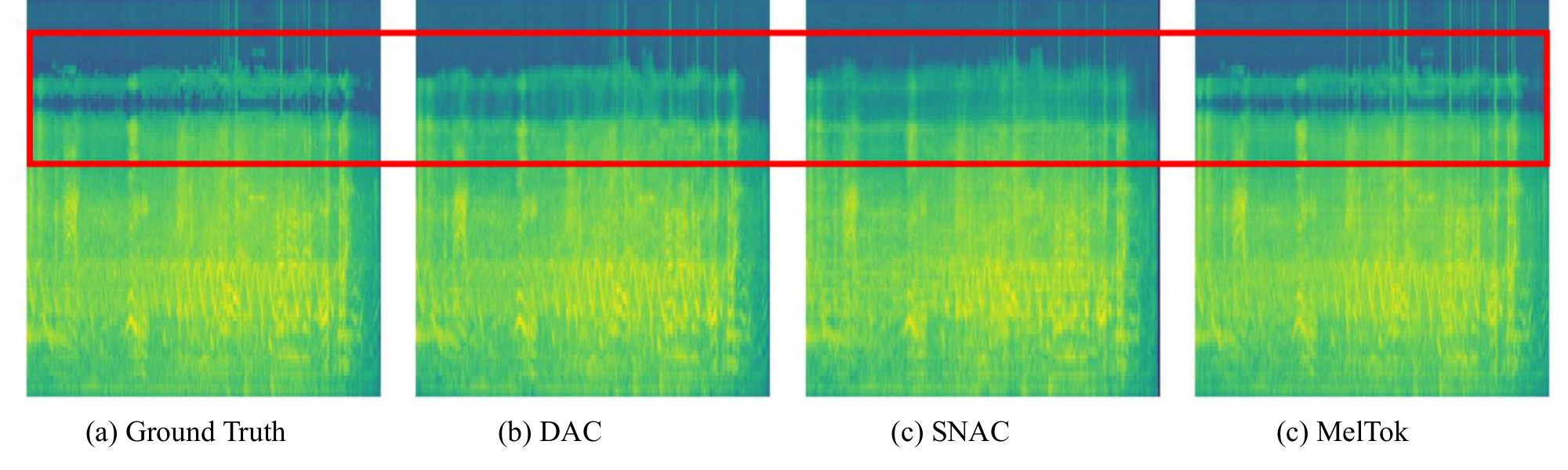}
    \caption{Mel-spectrogram comparison of original and reconstructed waveforms produced by different codecs. MelTok accurately reconstructs the high-frequency details.}
\end{figure}

\section{Introduction}

Discrete audio tokens generated by neural audio codecs compress continuous audio signals into a compact discrete space while preserving perceptual quality and semantic content~\cite{mousavi2025discreteaudiotokenssurvey}. Discrete audio tokens require less storage and achieve faster transmission than continuous embeddings~\cite{theis2017lossyimagecompressioncompressive}. These tokens serve as an efficient and flexible interface for downstream audio understanding tasks such as Automatic Speech Recognition (ASR) ~\cite{radford2022robustspeechrecognitionlargescale} ~\cite{hsu2021hubertselfsupervisedspeechrepresentation}, sound event detection~\cite{7280624}, music genre classification ~\cite{bahuleyan2018musicgenreclassificationusing}, and so on. Audio codecs typically consists of an encoder-quantizer-decoder
structure to encode. The encoder transforms the input waveform into a continuous representation ~\cite{langman2025spectralcodecsimprovingnonautoregressive}, 
The quantizer then maps this continuous representation to a discrete code from a codebook. Finally, the decoder reconstructs the original waveform from the selected code~\cite{agustsson2017softtohardvectorquantizationendtoend}. Compression is achieved when the number of bits used to represent the code is smaller than that required for the original audio signal~\cite{yang2020sourceawareneuralspeechcoding}.

Based on the number of quantizers used, quantization method of codecs can be broadly categorized into two types: multiple stage vector quantization ~\cite{Juang1982MultipleSV} and single vector quantization (SVQ). Full-Bandwidth Audio Codecs, such as DAC ~\cite{kumar2023highfidelityaudiocompressionimproved}, most commonly use residual vector quantizer (RVQ) for quantization. With iterative residual refinement, the multiple stage vector quantizer can compress 44.1 KHz audio with minimal loss in quality~\cite{kumar2023highfidelityaudiocompressionimproved}. However, multi-codebook codecs depend on multi-sequence prediction, which reduces efficiency and robustness ~\cite{li2024singlecodecsinglecodebookspeechcodec}. Single vector quantizers has emerged as simpler and particularly useful tools for downstream generation tasks such as acoustic language models~\cite{ye2025llasascalingtraintimeinferencetime}. Recent work such as WavTokenizer ~\cite{ye2024codecdoesmatterexploring} and UniCodec ~\cite{jiang2025unicodecunifiedaudiocodec} investigates compression in different domains using a single codebook. However, existing single quantizer approaches do not take “full bandwidth” rates of 44.1 or 48 kHz into account. This limitation is critical, as high-resolution acoustic detail is often required for good performance in general downstream tasks.~\cite{valin2016fullbandwidthaudiocodeclow}.

Neural audio codecs can also be categorized into two types based on the representation they compress: waveform-based and spectral-based approaches. In these waveform-based codecs, 1d tokenizers are naturally used to compress one-dimensional time-domain data~\cite{mousavi2025discreteaudiotokenssurvey}.  DAC ~\cite{kumar2023highfidelityaudiocompressionimproved} introduces multiscale mel reconstruction loss into this framework, which better captures details and thus improves audio quality. SNAC ~\cite{siuzdak2024snacmultiscaleneuralaudio} extends Residual Vector Quantization (RVQ) to multiple temporal resolutions, resulting in more efficient compression. In these waveform-based codecs, 1d tokenizers are naturally used to compress one-dimensional time-domain data.

However, high-frequency modeling is still challenging for these waveform-based models especially at high sampling rates with fewer quantizers seen in Figure ~\ref{fig:frequency_detail}, which is incompatible with our single-codebook objective. Spectral-based approaches solve this problem by transforming the waveform into the two-dimensional frequency domain features, which provides a more effecient representation and allows the model to better capture fine-grained frequency details~\cite{guo2025recentadvancesdiscretespeech}. Recent works such as Spectral Codec\cite{langman2025spectralcodecsimprovingnonautoregressive} compress the mel-spectrogram and reconstruct the time-domain audio signal, which typically rely on 1D convolution kernels operating along the temporal axis. 2D convolution kernels are first used in FunCodec~\cite{du2023funcodecfundamentalreproducibleintegrable} after transforming waveform into spectrum, followed by a residual vector quantization (RVQ) module. While most recent codecs adopt a 1D tokenizer, they often suffer from inherent limitations: the resulting representations are not sufficiently compact. And they imperfectly model high-frequency content through one codebook, leading to audio that are clearly distinguishable from originals ~\cite{défossez2022highfidelityneuralaudio}. Our contributions are as follows:
 
\textbf{Compression of 44.1 KHz audio in one single codebook}: This codec aims to encode high-resolution audio using a single quantizer layer with minimal loss of high-frequency content, which is crucial for downstream audio understanding tasks.

\textbf{Perceptual loss in mel-spectrogram reconstruction}: This codec incorporates perceptual loss into mel-spectrogram reconstruction to alleviate the over-smoothing problem, and further relates it to the feature matching loss used in traditional GAN-based codecs.

\textbf{A "frequency-invariance" representation for audio}:
We identify a critical issue in existing single quantizer codecs which can not reconstruct full bandwidth audio due to frequency-variance redundancy. Unlike classical 1D tokenizers, which treat frequency content sequentially and are thus constrained by the assumption of frequency variance, MelTok utilizes a 2D tokenizer. 

\textbf{Token-based vocoder}:We use a two-stage training framework to train 2D tokenizer and token-based Vocoder separately, which leads to better GAN training stability and audio quality.

\begin{figure}
    \centering
    \includegraphics[width=0.8\linewidth]{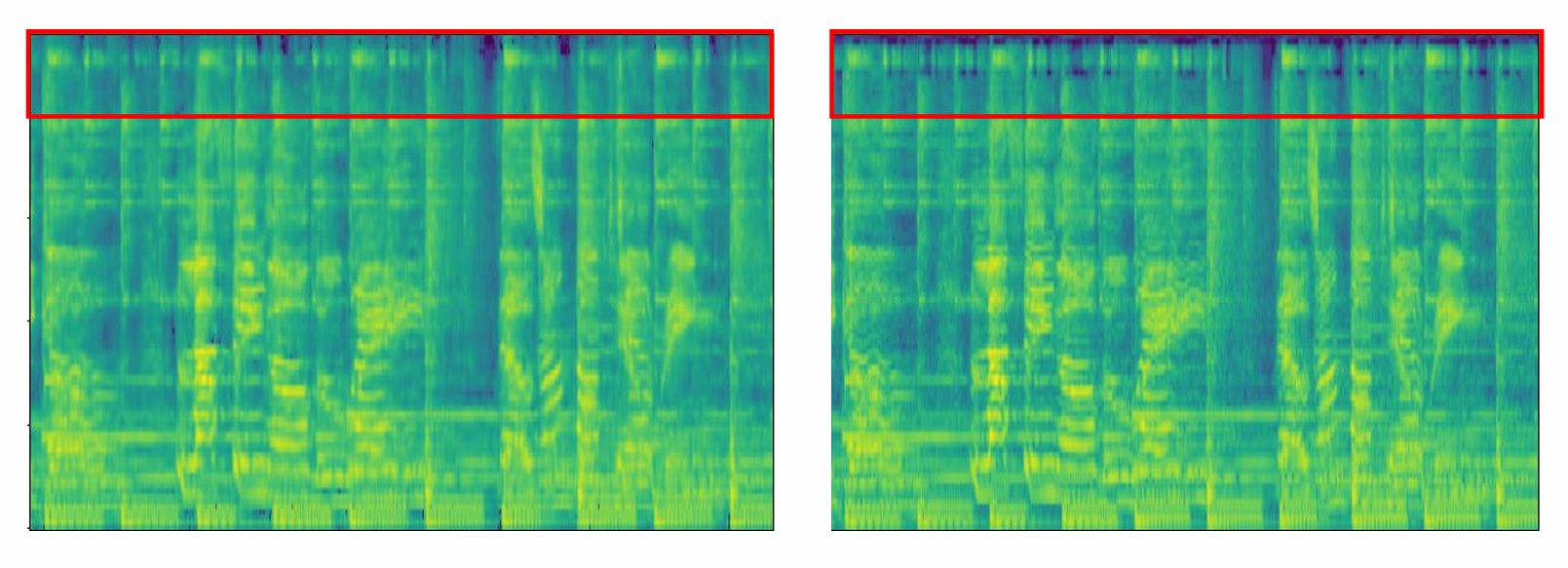}
    \caption{Loss of high-frequency (above 20k Hz) detail in a waveform-based codec. \textbf{Left:} spectrogram of result from a waveform-based codec using 4 quantizers. \textbf{Right:} ground truth (GT). Noticeable differences exist in the high-frequency Mel spectra, resulting in poor reconstruction of high-frequency components,  the bright ringing sound in the original sound.}
    \label{fig:frequency_detail}
\end{figure}

\section{Related Work}
\label{gen_inst}

\subsection{2D Tokenizer}
Existing discrete audio tokenizers compress the waveform or spetrogram along the temporal dimension through Conv1D, so it is defined as a 1D tokenizer. A 2D tokenizer compresses 2 dimentional data in both time and frequency dimensions, and then transforms the compressed representation into a sequence of discrete tokens.  Many works on 2D tokenizers have been explored in the image domain~\cite{yu2024imageworth32tokens}. VQ-VAE ~\cite{yang2020sourceawareneuralspeechcoding} first introduced vector quantization in the latent space of VAEs to map images, audio, and video into discrete values. Vector Quantized Generative Adversarial Network (VQGAN) ~\cite{esser2021tamingtransformershighresolutionimage} extended this by adding perceptual and adversarial losses to better capture detail infermation. ViT-GAN further replaced convolutions with ViT Transformers ~\cite{dosovitskiy2021imageworth16x16words}.  

Some work in audio has also explored the usage of 2D convolutional layers.  For example, PANNs ~\cite{kong2020pannslargescalepretrainedaudio} employ 2D CNNs such as CNN14 to capture time–frequency invariant patterns on Wavegram features, which helps the model improves Audio Pattern Recognition performance. To explicitly capture harmonic structure, a cross-frequency structure, \cite{ulicny2022harmonicconvolutionalnetworksbased} proposed a 2D convolution Harmonic Convolution. Inspired by the success of 2D kernels used in capturing audio structures, we aim to leverage this invariance to derive a more compact discrete representation for high-fidelity neural audio codec. Building on the structure of powerful 2D Transformer tokenizers ~\cite{nvidia2025cosmosworldfoundationmodel}, we fully explore compact 2D representations for audio.

\subsection{Vocoder}
Neural vocoders are neural network models that converts intermediate representations, such as mel-spectrograms, into high-fidelity audio~\cite{jiao2021universalneuralvocodingparallel}. Autoregressive models had long been the best-performing vocoders. WaveNet~\cite{oord2016wavenetgenerativemodelraw}, for instance, uses the mel-spectrogram as a local condition. However, its requirement for sequential (sample-by-sample) generation limits streaming efficiency. GAN-based models are capable of generating speech from mel-spectrogram efficiently~\cite{kong2020hifigangenerativeadversarialnetworks}. Since low latency is a key property for a good codec, we build our model on Vocos~\cite{siuzdak2024vocosclosinggaptimedomain}. Vocos is a fast neural GAN-based vocoder designed to reconstruct audio from mel-spectrogram through inverse Fourier transform.

\section{Methods}
\label{headings}
To achieve faster convergence, we propose a two-stage codec, where the first stage focuses on mel-spectrogram reconstruction with metric losses, and the second stage incorporates a discriminator to recover high-fidelity waveform from mel discrete tokens. This architecture significantly improves training efficiency, enabling our second-stage model to converge within only 50 epochs.

One efficient way to extract spectral features from an audio signal is through the Short-Time Fourier Transform (STFT). Given an input signal $x[n]$ with length T, $X_t[k]$, the STFT coefficient for the k-th frequency bin and the t-th time frame, denoted as $x_t[k]$. To better connect with the human sound perception, the frequency axis of the spectrogram can be mapped onto the Mel scale using a filter bank. This result is known as Mel spectrogram. Finally, the logarithm of the Mel spectrogram is taken to limit the range of values. The log-Mel spectrogram coefficient for the k-th frequency bin and the t-th time frame is given by:
\begin{equation}
\underset{0 \leq m \leq M-1}{LMS_t[m]} 
= \log \left[ \sum_{k=0}^{N-1} H_m[k] \cdot \left| \sum_{n=0}^{N-1} x[n] \, w[n-tH] \, e^{-j 2\pi kn/N} \right|^2 \right].
\label{eq:LMS_full}
\end{equation}
where $H_{m}[k]$ is the $k^{th}$ coefficient for the $m^{th}$ filter bank~\cite{2001SpokenLP}, $w[n]$ is the window function (e.g., Hamming window), H is the hop size, and N is the total number of frequency bins.

The total number of filter banks, denoted as M, determines the frequency resolution of the resulting Mel spectrogram. To preserve high-frequency details, the number of Mel filter banks M should be chosen sufficiently large, i.e., not less than 96.

\subsection{2D Tokenizer VS 1D Tokenizer}

\begin{figure}
    \centering
    \includegraphics[width=1.0\linewidth]{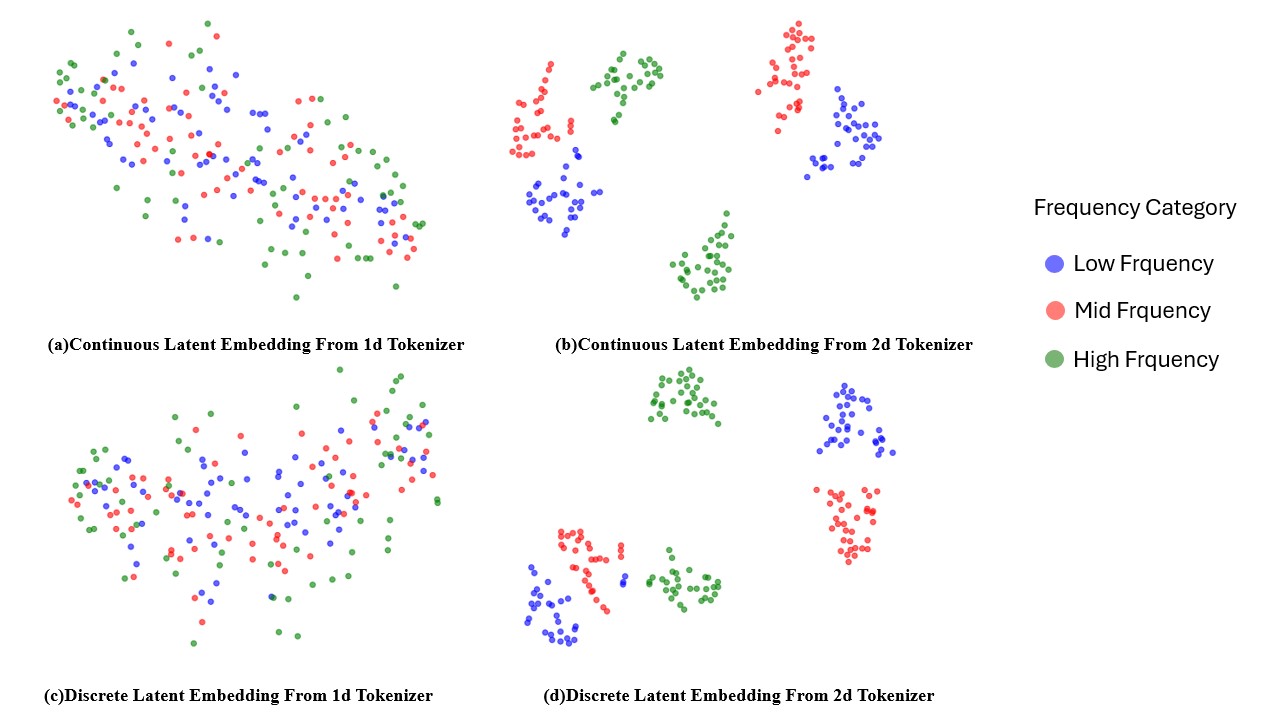}
    \caption{t-SNE visualization of latent embeddings produced by the 1D and 2D tokenizers. Each point represents the latent embedding of a single time frame, and each color denotes a different frequency band.}
    \label{fig:1d2dcompare}
\end{figure}

\begin{figure}
    \centering
    \includegraphics[width=1\linewidth]{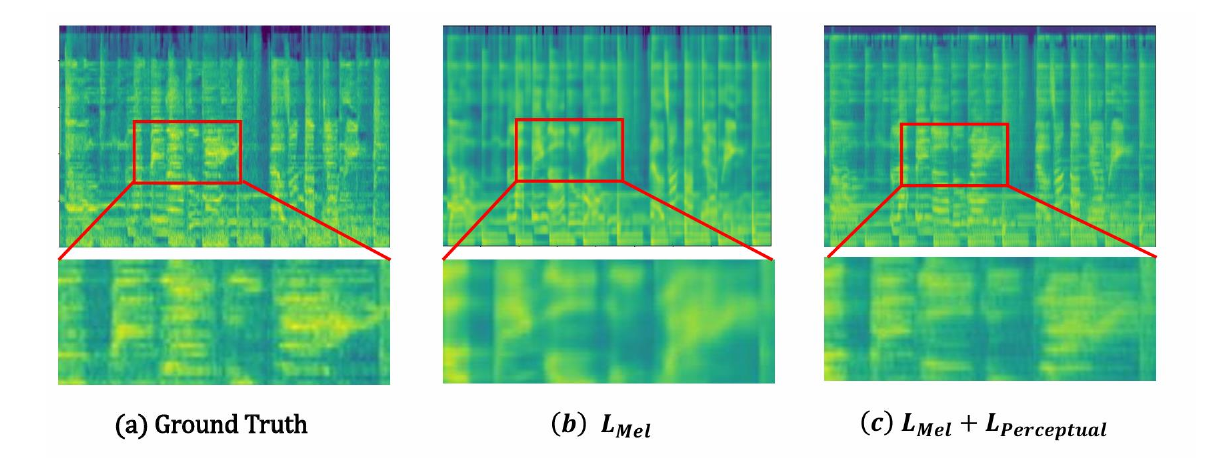}
    \caption{Comparison of reconstructed log-mel spectrograms  trained with different loss. The bottom row shows a zoomed-in view, highlighting the differences in smoothness and spectral sharpness. }
    \label{fig:mel_smoothness}
\end{figure}
Compared with conventional 1D tokenizers, our proposed 2D tokenizer provides a more compact representation for single-codebook audio compression. A 1D tokenizer processes only the temporal dimension, forcing the model to encode all frequency variation into a single token sequence. In contrast, our 2D tokenizer maintains the inherent time–frequency structure, decreasing redundancy caused by frequency variation. Furthermore, patchifying the mel-spectrogram enables the two-dimensional tokenizer to generate frequency-specific embeddings. The detail of patchify can be seen in ~\ref{app:architecture}. This leads to disentangled representations in which high-, mid-, and low-frequency structures are clearly separated in the latent space, seen in ~\ref{fig:1d2dcompare}. 
These findings suggest that 2D tokenization provides a more suitable representation than 1D tokenization for single-codebook audio codecs. The encoder details are given in ~\ref{Table:encoder_detail}.

\begin{figure}
    \centering
    \includegraphics[width=0.8\linewidth]{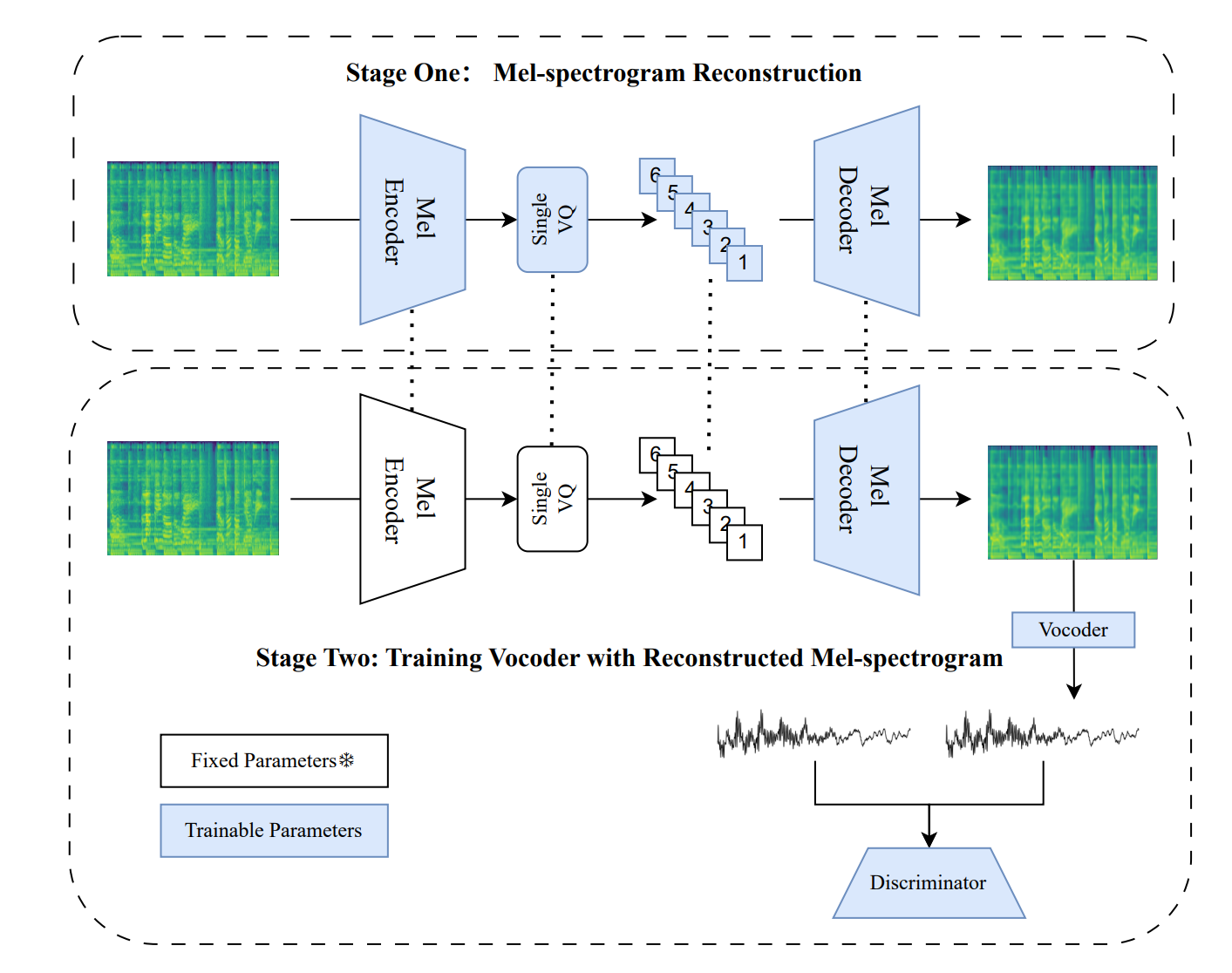}
    \caption{Training paradigm of MelTok.}
    \label{fig:melcap_training}
\end{figure}

\subsection{First Stage: Mel-Spectralgram Reconstruction via 2D Tokenizer}
In the first stage, we compress audio into discrete tokens and then reconstruct the mel-spectrogram from these tokens. For convenience, we adopt the log-mel representation mentioned in ~\eqref{eq:LMS_full}, which helps constrain the value range.  Our method builds on the Cosmos tokenizer ~\cite{nvidia2025cosmosworldfoundationmodel} as the foundational encoder–decoder. We optimize with the L1 loss applied on the log Mel-spectrograms, which minimizes the element-wise difference between the input and reconstructed spectrograms:
\begin{equation}
\mathcal{L}_{\text{Mel}} = \frac{1}{T \cdot M} \sum_{t=0}^{T-1} \sum_{m=0}^{M-1} \left\lVert \mathbf{S}_{t,m} - \hat{\mathbf{S}}_{t,m} \right\rVert_{1},
\end{equation}
where $\mathbf{S}_{t,m}$ and $\hat{\mathbf{S}}_{t,m}$ denote the value of the input and reconstructed Mel-spectrogram, respectively, at the $t$-th time frame and the $m$-th Mel frequency bin.
Using only a reconstruction loss can lead to overly smooth reconstructed Mel-spectrograms, as shown in Figure~\ref{fig:mel_smoothness}. This oversmoothness negatively affects downstream generation tasks such as TTS, causing the synthesized waveform to sound muffled and unnatural~\cite{sheng2018reducingoversmoothnessspeechsynthesis}.  In order to obtain a more detailed Mel spectrogram, we employ perceptual loss based on the VGG-19 features, given by ~\cite{simonyan2015deepconvolutionalnetworkslargescale}. We provide a theoretical justification for the perceptual loss on mel-spectrograms and the feature matching loss used when training generator of the vocoder in the appendix ~\ref{equivalence}.

\begin{equation}
\mathcal{L}_{\text{Perceptual}} = \frac{1}{L} \sum_{l=1}^{L} \sum_{t} \alpha_l 
\left\lVert \mathrm{VGG}_l(\hat{S}) - \mathrm{VGG}_l(S) \right\rVert_{1},
\end{equation}
where $\mathrm{VGG}_{l}(\cdot) \in \mathbb{R}^{T \times M \times C}$ 
denotes the feature maps extracted from the $l$-th layer of a pre-trained VGG-19 network, $L$ is the number of layers considered, and $\alpha_{l}$ is the weight assigned to the $l$-th layer.

\subsection{Second Stage: From Mel-Spectrogram Tokens to Waveform}

Neural vocoders are primarily designed to recover audio waveforms from mel-spectrogram representations~\cite{siuzdak2024vocosclosinggaptimedomain}. In contrast, our goal is to reconstruct audio waveforms from the mel-spectrogram discrete tokens obtained in the first stage, rather than from the ground-truth mel-spectrograms. Usage of these codes as input to the vocoder has advantage: it help stabilize GAN training in the second stage. Given that the first stage employs a VQ-VAE, the resulting mel-spectrograms contain reconstruction errors. As we have theoretically shown, the mapping from a continuous high-dimensional mel-spectrogram to a finite discrete codebook introduces a upper bound on the propagated error. Because the codes are discrete and belong to a finite codebook, the propagated errors are strictly bounded, preventing extreme deviations and ensuring more robust waveform recovery. By contrast, mel-spectrograms live in a continuous space, where errors cannot be strictly bounded, making the waveform recovery more sensitive to small perturbations.

\subsubsection{Analysis: Bounded Error of Discrete Codes}
\label{analyse}

\begin{assumption}[Discrete Code Quantization]
\label{assumption}
Let $\mathbf{s}$ denote the original mel-spectrogram and $\mathbf{c}  \in \mathcal{C}$ be the discrete token obtained from a VQ-VAE encoder $E$, where $\mathcal{C}$ is a finite codebook. 
We assume that the quantization error due to mapping $\mathbf{s}$ to any code $\mathbf{c}$ in codebook is bounded:
\[
\|\mathbf{c_n} - \mathbf{s}\| \le \|\mathbf{c} - \mathbf{s}\| \le\|\mathbf{c_f} - \mathbf{s}\| = \Delta,
\]
where $c_f$ denotes the farthest code, and $c_n$ denotes the nearest code, $\Delta$ depends on the size and codebook dimension.

\end{assumption}
Let $\mathbf{w}$ denote the waveform reconstructed in the second stage via a neural vocoder $f$. Then the reconstruction is
\[
\mathbf{w} = f(\mathbf{c}).
\]

\begin{lemma}[Lipschitz Bound]
\label{lem:lipschitz}
If $f$ is locally Lipschitz continuous with constant $L$, then
\[
\|f(\mathbf{c}_1)-f(\mathbf{c}_2)\| \le L \|\mathbf{c}_1 - \mathbf{c}_2\|,
\]
\end{lemma}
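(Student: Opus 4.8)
The plan is to observe that the claimed inequality is almost the definition of Lipschitz continuity, so the only real work is to promote the \emph{local} hypothesis to a bound that holds for the specific pair of codes $\mathbf{c}_1, \mathbf{c}_2 \in \mathcal{C}$. By definition of local Lipschitz continuity, each point $\mathbf{p}$ in the domain of $f$ admits a neighborhood $U_{\mathbf{p}}$ and a constant $L_{\mathbf{p}}$ with $\|f(\mathbf{x}) - f(\mathbf{y})\| \le L_{\mathbf{p}} \|\mathbf{x} - \mathbf{y}\|$ for all $\mathbf{x}, \mathbf{y} \in U_{\mathbf{p}}$. If $\mathbf{c}_1$ and $\mathbf{c}_2$ happen to lie in a common such neighborhood the claim is immediate with $L = L_{\mathbf{p}}$; the substance of the argument is handling the case where they do not.

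First I would connect the two codes by the straight-line segment $\gamma(s) = (1-s)\,\mathbf{c}_1 + s\,\mathbf{c}_2$, $s \in [0,1]$, which lies in the domain of $f$ since the vocoder is defined on the whole continuous embedding space. The image $\gamma([0,1])$ is compact, so the open cover $\{U_{\mathbf{p}}\}$ admits a Lebesgue number, and I can choose a partition $0 = s_0 < s_1 < \dots < s_k = 1$ fine enough that each consecutive pair $\gamma(s_{i}), \gamma(s_{i+1})$ lies in a single neighborhood with local constant $L_i$. Applying the triangle inequality and then the local bound on each subinterval gives
\[
\|f(\mathbf{c}_1) - f(\mathbf{c}_2)\| \le \sum_{i=0}^{k-1} \|f(\gamma(s_{i+1})) - f(\gamma(s_i))\| \le \sum_{i=0}^{k-1} L_i \, \|\gamma(s_{i+1}) - \gamma(s_i)\|.
\]
Setting $L = \max_i L_i$ and using that the intermediate points are \emph{collinear} and ordered along the segment, the distances telescope exactly, $\sum_i \|\gamma(s_{i+1}) - \gamma(s_i)\| = \|\mathbf{c}_1 - \mathbf{c}_2\|$, yielding the stated inequality.

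To obtain a single constant $L$ valid for every pair of codes simultaneously --- which is what Assumption~\ref{assumption} needs downstream --- I would finally invoke finiteness of the codebook $\mathcal{C}$: there are only finitely many pairs, so taking the maximum of the per-pair constants produced above gives one global $L$. The main obstacle is precisely this local-to-global step: local Lipschitz continuity alone does not furnish a uniform constant, and the telescoping identity only collapses to $\|\mathbf{c}_1 - \mathbf{c}_2\|$ because the chaining is carried out along the \emph{line segment} rather than an arbitrary path; any other path would contribute its (longer) arc length instead. A cleaner alternative, if one is willing to strengthen the hypothesis, is to assume $f$ is Lipschitz on the compact convex hull of $\mathcal{C}$, in which case the bound holds directly and the chaining argument becomes unnecessary.
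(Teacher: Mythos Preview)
Your argument is correct, and in fact more careful than anything the paper supplies: the paper gives \emph{no} proof of this lemma at all. It is stated and immediately consumed by Theorem~\ref{thm:bounded_error}, with the phrase ``locally Lipschitz continuous with constant $L$'' effectively being read as the global Lipschitz condition itself. In other words, the paper treats the inequality as definitional and does not engage with the local-to-global issue you identify.

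Your chaining-along-the-segment argument with a Lebesgue-number partition is the standard way to promote local Lipschitz to Lipschitz on a compact convex set, and the telescoping step is handled correctly (collinearity is exactly what makes $\sum_i \|\gamma(s_{i+1})-\gamma(s_i)\| = \|\mathbf{c}_1-\mathbf{c}_2\|$). The final appeal to finiteness of $\mathcal{C}$ to obtain a uniform $L$ over all code pairs is also sound. So relative to the paper you are not taking a different route so much as filling a gap the authors left implicit; your closing remark that one could simply assume Lipschitz on the convex hull of $\mathcal{C}$ is precisely the hypothesis the paper appears to be using in spirit, and is the cleanest resolution.
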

\begin{theorem}[Bounded Waveform Error]
\label{thm:bounded_error}
Combining Lemma~\ref{assumption} and Lemma~\ref{lem:lipschitz}, the error in the reconstructed waveform due to discrete code quantization is bounded:
\[
\|\mathbf{w} - f(\mathbf{s})\| = \|f(\mathbf{c}) - f(\mathbf{s})\| 
\le L \|\mathbf{c} - \mathbf{s}\| 
\le L \Delta.
\]
Thus, the propagated error from the first-stage discrete token to the final waveform reconstruction is strictly bounded.
\end{theorem}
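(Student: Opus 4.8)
The plan is to chain the two prior results directly, since the claim is an immediate consequence of substituting the quantization bound into the Lipschitz inequality. First I would fix the two points at which the vocoder $f$ is evaluated, namely the discrete code $\mathbf{c}$ and the original mel-spectrogram $\mathbf{s}$, treating both as elements of the common input space on which $f$ is defined, so that $f(\mathbf{s})$ is meaningful alongside $\mathbf{w} = f(\mathbf{c})$.

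Next I would invoke Lemma~\ref{lem:lipschitz} with the specific choice $\mathbf{c}_1 = \mathbf{c}$ and $\mathbf{c}_2 = \mathbf{s}$, which yields $\|f(\mathbf{c}) - f(\mathbf{s})\| \le L\,\|\mathbf{c} - \mathbf{s}\|$. This is the only genuinely substantive inequality in the argument, and it holds by hypothesis provided $\mathbf{c}$ and $\mathbf{s}$ lie within a region on which $f$ is Lipschitz with constant $L$. I would then apply the quantization bound from Assumption~\ref{assumption}, namely the middle and right parts of the stated chain $\|\mathbf{c} - \mathbf{s}\| \le \Delta$, to replace $\|\mathbf{c} - \mathbf{s}\|$ by $\Delta$.

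Combining these two steps gives $\|\mathbf{w} - f(\mathbf{s})\| = \|f(\mathbf{c}) - f(\mathbf{s})\| \le L\,\|\mathbf{c} - \mathbf{s}\| \le L\Delta$, which is exactly the assertion; the equality on the left is just the definition $\mathbf{w} = f(\mathbf{c})$. No further estimates are required, so I would expect the writeup to be a single displayed chain of inequalities.

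The only real subtlety, and the step I would scrutinize most, is the implicit requirement that the \emph{local} Lipschitz bound covers both $\mathbf{c}$ and $\mathbf{s}$ simultaneously. Since local Lipschitzness constrains $f$ only on a neighborhood, I would make explicit that $\mathbf{s}$ together with every codeword $\mathbf{c}$ of interest lies in one common region where the constant $L$ is valid, for instance a bounded set containing both the image of the encoder and its continuous pre-quantization targets. Granting that, everything else is routine substitution and the bound $L\Delta$ follows immediately.
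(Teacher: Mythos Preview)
Your proposal is correct and matches the paper's approach exactly: the paper offers no proof beyond the displayed chain of inequalities already contained in the theorem statement, obtained by plugging the Lipschitz bound and then the quantization bound $\|\mathbf{c}-\mathbf{s}\|\le\Delta$ in sequence. Your additional remark about ensuring the local Lipschitz neighborhood contains both $\mathbf{c}$ and $\mathbf{s}$ is a valid caveat that the paper itself does not make explicit.
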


Assuming that the neural vocoder $f$ is $L$-Lipschitz continuous,
the error propagated to the waveform $\mathbf{w}$ can be bounded by 
$L \Delta$. (The vocoder is a composition of convolution network, activation function and ISTFT; a proof that ISTFT is Lipschitz is given in Appendix~\ref{app:istft_lipschitz}.)

\subsubsection{Model Architecture}
The theoretical analyse provides guidance for vocoder architecture design. For example, in the generator we choose to use the Snake activation function instead of Leaky ReLU. The Snake activation helps maintain Lipschitz continuity of the vocoder network, as its derivative is bounded by a constant of 1~\cite{ng2025multibandfrequencyreconstructionneural}. Consequently, using Snake can help control the Lipschitz constant $L$ of the vocoder network, limiting the impact of errors from the first-stage mel-spectrogram. Also, in the discriminator we choose to use spectral normalization ~\cite{miyato2018spectralnormalizationgenerativeadversarial} instead of batch normalization, which is designed to guarantee  Lipschitz continuity in discriminator.

\subsubsection{Training Objectives}
Following the Vocos framework, our second-stage training objective consists of two key components: (i) fine-tuning the decoder from the first stage to better align the latent codes with acoustic features, and (ii) training a vocoder that translates mel-spectral codes into time-domain waveforms. To achieve this objective, we employ a combination of loss functions.

 \textbf{Reconstruction Loss.}
Reconstruction loss refers to the L1 distance between the mel-scaled magnitude spectrograms of the ground-truth waveform and the generated waveform ~\cite{kong2020hifigangenerativeadversarialnetworks}. Unlike ~\cite{yang2023hificodecgroupresidualvectorquantization} that uses
80 mel-spectrogram bins, our setup constrains the number of bins to be no smaller than the mel-spectrogram resolution defined in the first stage, which is 96 mel-spectrogram bins for music and environment sound. This ensures consistency between the first stage and second stage, preventing the loss of high-frequency details when training second stage.

 \textbf{Feature Matching Loss.}
Feature matching loss measures the learned similarity between a real and generated sample via discriminator features~\cite{larsen2016autoencodingpixelsusinglearned}, ~\cite{kumar2019melgangenerativeadversarialnetworks} Hifi-GAN ~\cite{kong2020hifigangenerativeadversarialnetworks} first used it as an additional loss to train the generator of vocoder. In our case, feature matching is used to reduce over-smoothing, serving a similar role as the VGG loss applied in the first stage. Unlike the original setting in Hifi-GAN, where the feature matching loss weight is 2, we increase it to 5.

\textbf{Adversarial Loss.}
We employ two discriminators—a multi-resolution discriminator (MRD) ~\cite{kumar2023highfidelityaudiocompressionimproved} and a multi-period discriminator (MPD)—to enhance perceptual quality via adversarial learning~\cite{zeghidour2021soundstreamendtoendneuralaudio}.
\section{Experiments}
\label{others}
Reconstruct waveform from discrete tokens has become a fundamental task for audio codecs. In this section, we assess the performance of method
 relative to established baseline codecs.

 \textbf{DataSets.} The model is trained on the AudioSet dataset, using the entire balanced training subset (bal train), musdb18-hq and the HQ-Conversations dataset~\cite{magicdata2024iscslp}. The AudioSet covers a wide range of sounds, including human and animal vocalizations, musical instruments and genres, as well as common everyday environmental noises. We evaluate the reconstruction performance of the audio domain on the AudioSet evaluation set. 
 We evaluate the reconstruction performance of the music domain on the mixture sound in musdb18hq test set. All audio files are kept at their original sampling rate of 44 kHz. For each audio sample, mel-scaled spectrograms are computed with the following parameters: FFT size $n_{fft}$=1024, hop size $hop_n$ =256, and 96 Mel bins.

 \textbf{Training Details.} In the first stage, we train the mel-spectrogram tokenizer using a combination of L1 reconstruction loss, quantization loss, and perceptual loss until convergence. Afterward, we replace the perceptual loss with a Gram-matrix loss to fine-tune the model, continuing training until convergence. During training,  samples are randomly cropped to 24,320 samples, yielding a mel-spectrogram resolution of 96 × 96. We also train a vocoder using ground-truth mel-spectrograms as reference to evaluate the effect of different loss terms. In the second stage, the encoder and quantizer parameters are frozen. We jointly train the tokenizer decoder, vocoder, and discriminator.

 \textbf{Baseline Methods.} Our proposed model is compared against DAC ~\cite{kumar2023highfidelityaudiocompressionimproved}, SNAC~\cite{siuzdak2024snacmultiscaleneuralaudio}, Spectral Codec ~\cite{langman2025spectralcodecsimprovingnonautoregressive},  NVIDIA NeMo Audio Codec, UniCodec ~\cite{jiang2025unicodecunifiedaudiocodec} and WavTokenizer ~\cite{ji2025wavtokenizerefficientacousticdiscrete}. For all baselines, we use the officially released pretrained checkpoints— the 24kHz version for WavTokenizer and 44 kHz versions for other methods, which are publicly available online. 

\subsection{Evaluation}
We evaluate our models using five primary objective metrics, VISQOL, LSD, Mel Distance, STFT Distance, and Mel Cepstral Distortion. We also performed subjective evaluation with \textbf{MUSHRA}. The primary metrics assess spectral and perceptual fidelity.

 \textbf{VISQOL.} ViSQOL is an objective perceptual audio quality metric that compares reference and degraded audio signals to produce scores correlated with human listening judgments. In this work, we use audio mode, which operates on fullband audio at 48 kHz.

 \textbf{LSD.}
Log-Spectral Distance (LSD) is a widely used objective metric that measures the difference between the log-magnitude spectra of reference and synthesized audio, providing an indication of spectral distortion and overall reconstruction fidelity.

 \textbf{Mel Distance.}
L1 distance between the mel-scaled magnitude spectrograms of the ground truth and the generated sample.

 \textbf{STFT Distance.} L1 distance between time-frequency representations of the ground truth and the prediction, computed using multiscale Short-Time Fourier Transform (STFT).

 \textbf{MCD.} Mel Cepstral Distortion (MCD) is a measure of how different two sequences of mel cepstra are. The core idea is that the smaller the MCD between the original and reconstructed mel cepstral sequences, the more accurately the codec preserves the spectral characteristics and perceptual quality of the original audio signal.

For audio dataset, we use \textbf{Frechet Audio Distance (FAD)}, which is a distribution-level metric that measures how similar reconstruted audio is to a input corpus by comparing their distributions in a deep embedding space. In our setup, we extract embeddings using a pretrained CLAP model \cite{wu2024largescalecontrastivelanguageaudiopretraining} with a sample rate of 48 kHz.

For music dataset, we also use another two metrics:
    We calculate the \textbf{similarity} between the reconstruted vocal part and the original vocal part with a pre-trained wavlm-base-sv ~\cite{Chen_2022} and use Whisper-tiny ~\cite{radford2022robustspeechrecognitionlargescale} to compute the \textbf{Word Error Rate (WER)} of the generated vocal part as the vocal clarity evaluation metric.

\subsection{Ablation Experiment Result for Mel-spectrogram Reconstruction}

To investigate the impact of 2D tokenizer used in first stage on the perceptual quality of the generated audio, we conduct an ablation study using a fixed pretrained vocoder. Specifically, we input the reconstructed mel-spectrograms obtained from 1D and 2D tokenizer into a vocoder trained with ground truth mel-spectrograms.




\begin{table}[h]
    \centering
    \caption{Ablation Experiment One: Comparison of 1d tokenizer and 2d tokenizer used in the first stage. }
    \label{tab:placeholder_label}
    \begin{tabular}{lccccc}
        \toprule
        & \textbf{Visqol $\uparrow$} & \textbf{LSD$\downarrow$} & \textbf{MCD$\downarrow$} & \textbf{STFT Dis$\downarrow$} & \textbf{Mel Dis$\downarrow$} \\
        \midrule
        1D Tokenizer & 3.56 & 1.09 & 11.47 & 3.97 & 1.28 \\
        2D Tokenizer & \textbf{4.36} & \textbf{0.67} & \textbf{2.64} & \textbf{1.68} & \textbf{0.48} \\
        \bottomrule
    \end{tabular}
    \label{tab:2d_comparison}
\end{table}

Table~\ref{tab:2d_comparison} compares different tokenizers used in the first stage. This demonstrates that incorporating using 2d tokenizer can improves the mel-spctrogram reconstruction, which is beneficial for audio reconstruction in the second stage. We provide the encoder architecture details of 1D tokenizer and 2D tokenizer in appendix ~\ref{tabel: ablation_study_conv}.

\begin{table}[h]
    \centering
    \caption{Ablation Experiment Two: Comparison of different loss terms used in the first stage. }
    \label{tab:placeholder_label}
    \begin{tabular}{lccccc}
        \toprule
        & \textbf{ Visqol $\uparrow$} & \textbf{LSD$\downarrow$} & \textbf{MCD$\downarrow$} & \textbf{STFT Dis$\downarrow$} & \textbf{Mel Dis$\downarrow$} \\
        \midrule
        w/o vgg & 4.18 & 0.76 & 3.82 & \textbf{1.87} & 0.57 \\
        w/ vgg   & \textbf{4.29} & \textbf{0.66} & \textbf{3.74} & 1.90 & \textbf{0.56} \\
        \bottomrule
    \end{tabular}
    \label{tab:loss_comparison}
\end{table}

Table~\ref{tab:loss_comparison} compares different loss terms used in the first stage. This demonstrates that incorporating VGG loss effectively mitigates over-smoothing and enhances spectral reconstruction , which is beneficial for training in second stage.

\subsection{Audio Reconstruction}

\setlength{\tabcolsep}{4pt}
\begin{table}[h]
\caption{Objective evaluation metrics for MelTok and baselines on Audioset eval set. We \textbf{bold} the best results in all the models, and \underline{\textbf{bold and underline}} the best results in single-codebook codec models. (24k: ground truth resampled to 24kHz before measurement)}
\centering
\small
\setlength{\tabcolsep}{3pt}
\begin{tabular}{lccccccccl}
\toprule
\textbf{Codec} & \textbf{\makecell{Codebook \\ Number}} & \textbf{\makecell{Bitrate\\(kbps)}} & \textbf{TPS} & \textbf{LSD}$\downarrow$ & \textbf{\makecell{STFT \\ Dis}}${\downarrow}$ & \textbf{\makecell{Mel \\ Dis}}${\downarrow}$ & \textbf{\makecell{Visqol}}${\uparrow}$ & \textbf{MCD}${\downarrow}$  &\textbf{FAD}${\downarrow}$ \\
\midrule
DAC & 9 & 9.0 & 774 & 0.81 & \textbf{2.06} & 0.50 & \textbf{4.19} & 3.90  &0.03 \\
SNAC & 4 & 2.6 & 240 & 0.86 & 2.07 & 0.61 & 4.01 & 3.80  &0.03 \\
Spectral Codec & 9 & 6.9 & 672 & 0.99 & 2.42 & 0.72 & 4.02 & 4.82  &0.02 \\
Nvidia Codec & 9 & 6.9 & 672 & 1.07 & 2.45 & 0.74 & 3.96 & 5.39  &0.08 \\
Wavtokenizer (audio/music) & 1 & 0.9 & 75 & 1.18 & 2.19 & 0.45(24k) & 3.26 & 4.88  &0.15 \\
Wavtokenizer (unify) & 1 & 0.48 & 40 & 1.24 & 2.37 & 0.52(24k) & 3.26 & 5.32  &0.16 \\
UniCodec & 1 & 0.9 & 75 & 1.24 & 2.11 & \textbf{0.38(24k)} & 3.18 & 4.97  &0.12 \\
Our Method & 1 & 3.4 & 260 & \textbf{0.77} & \underline{\textbf{2.07}} & \textbf{0.44} & \underline{\textbf{4.17}} & \textbf{2.92}  &\underline{\textbf{0.12}} \\
\bottomrule
\end{tabular}
\label{tab:codec_metrics}
\end{table}

In the second stage, we use the mel-spectrogram tokens obtained from the first stage as input. After training the second stage, our final results are obtained from the jointly optimized decoder and vocoder.  Our evaluation, conducted on the AudioSet test set and detailed in ~\ref{tab:codec_metrics}.  The result shows that Mel Tok achieves competitive perceptual quality (VISQOL 4.17) while using only a single codebook. It also obtains the best fidelity metrics (lowest LSD and MCD Distance), outperforming single-codebook baseline and multi-codebook baselines.

\paragraph{Subject Evaluation.}
We conducted a MUSHRA-style listening test to evaluate the perceptual quality of the generated audio. A total of 15 participants were recruited for the experiment. Each participant was presented with 30 randomly selected audio samples drawn from a diverse set of test cases including music, speech, and general sounds. For each trial, participants were asked to rate the audio samples on a continuous quality scale, following the MUSHRA protocol. After the test, we aggregated the ratings across all participants and samples to obtain the final statistical results. The subjective evaluation indicates that Mel Tok achieves perceptual quality comparable to other multi-codec approaches.

\begin{figure}
    \centering
    \includegraphics[width=0.5\linewidth]{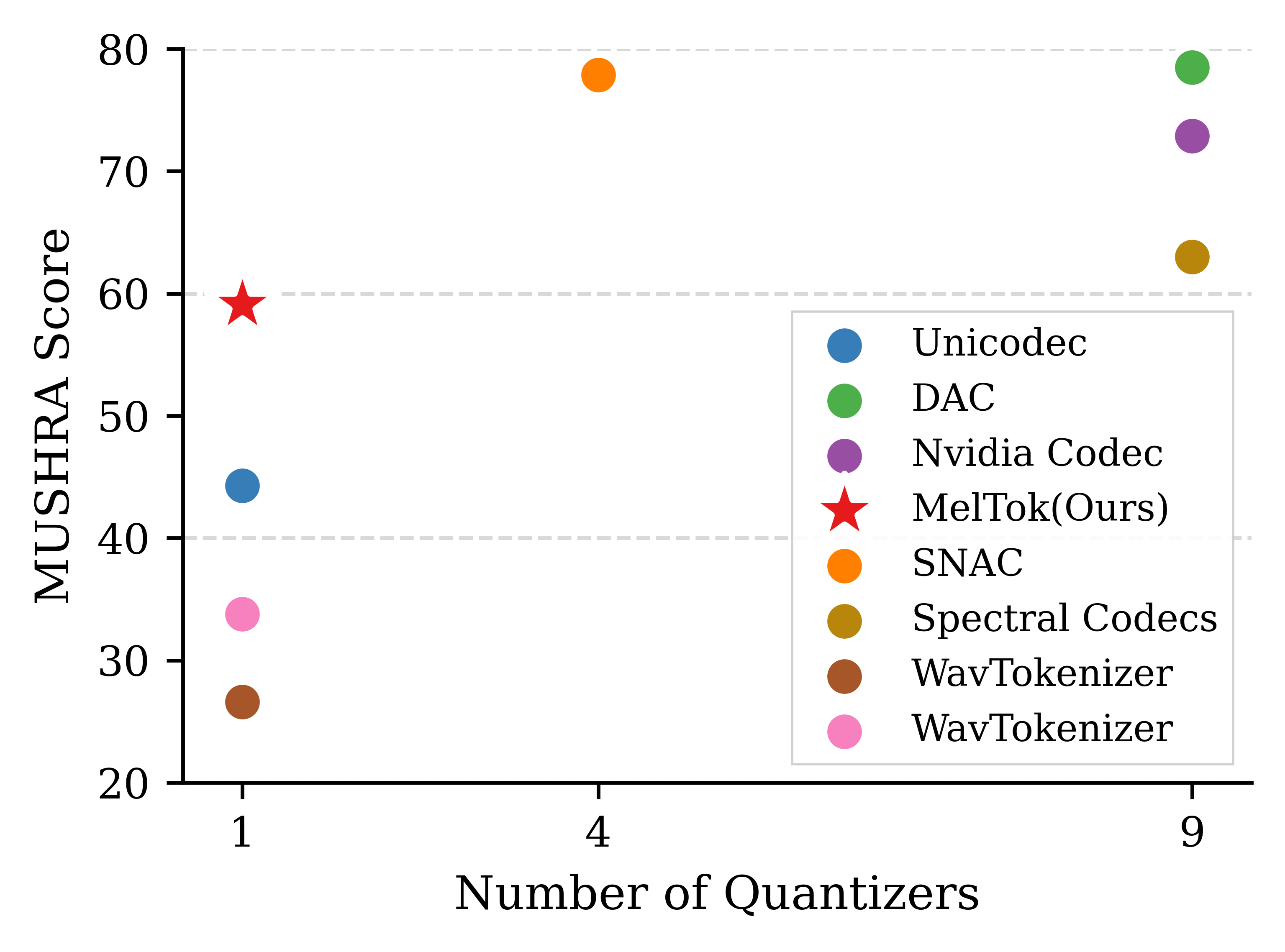}
\caption{Subjective evaluation metrics calculated for different codecs. Points closer to the top-left indicate that better perceptual quality is achieved using fewer tokens, corresponding to better codec performance.} 
    \label{fig:subject_result}
\end{figure}


\setlength{\tabcolsep}{4pt}
\begin{table}[h]
    \centering
    \caption{Objective quality on the \emph{musdb18-hq test} dataset.We \textbf{bold} the best results in all the models, and \underline{\textbf{bold and underline}} the best results in
single-codebook codec models. “Original” refers to results obtained
using the ground-truth waveform. (24k: ground truth resampled to 24kHz before measurement)}
    \label{tab:placeholder_label}
    \begin{tabular}{lccccccc}
        \midrule
        \textbf{Codec} & \textbf{LSD$\downarrow$} & \textbf{STFT Dis$\downarrow$} & \textbf{Mel Dis$\downarrow$} & 
        \textbf{Visqol$\uparrow$} & \textbf{MCD$\downarrow$} & 
        \textbf{\makecell{WER$\downarrow$\\Original: 1.62}}  & \textbf{SPK-SIM$\uparrow$} \\
        \midrule
        DAC                             & \textbf{0.85} & \underline{\textbf{1.71}} & \textbf{0.36} & 4.18 & \textbf{2.04} & \textbf{1.85} & \textbf{0.98} \\
        SNAC                            & 1.20 & 2.53 & 0.53 & 3.80 & 3.08 & 2.10 & 0.96 \\
        Spectral Codec                  & 1.65 & 3.17 & 0.73 & 3.92 & 3.12 & 2.46 & 0.90 \\
        Nvidia Codec                    & 1.77 & 3.24 & 0.76 & 3.84 & 3.28 & 2.22 & 0.92 \\
        \makecell[l]{Wavtokenizer \\ (audio/music)}  & 2.01 & 2.20 & 0.47 (24k) & 2.73 & 3.90 & 3.31 & 0.92 \\
        \makecell[l]{Wavtokenizer \\ (unify)}
                                        & 2.04 & 2.39 & 0.62 (24k) &2.69 & 4.38 &3.06 & 0.89 \\
        UniCodec                        & 1.99 & 2.12 & \textbf{0.40 (24k)} & 2.83 & 3.26 & 2.34 & 0.93 \\
        Our Method                      & \underline{\textbf{0.90}}& \textbf{2.02} & \textbf{0.40} & \textbf{4.23} & \underline{\textbf{2.46}} & \underline{\textbf{2.14}} & \underline{\textbf{0.96}} \\
        \hline
    \end{tabular}
    \label{tab:codec_metrics_music}
\end{table}

\begin{table}[h]
    \caption{Downstream sound event classification performance. "Reference" refers to results obtained using the ground-truth waveform, while the other columns show the performance using audio reconstructed by different models.}
    \centering
    \begin{tabular}{lcccc}
        \hline
        & Reference & SNAC & DAC & Our Method \\
        \hline
        \textbf{F1}$\uparrow$  & 0.3899 & 0.3223 & 0.3363 & \textbf{0.3398} \\
        \textbf{mAP}$\uparrow$ & 0.1626 & 0.1278 & 0.1251 & \textbf{0.1345} \\
        \hline
    \end{tabular}
    \label{tab:downstream}
\end{table}

\begin{table}[h]
    \centering
    \caption{Downstream music genre classification performance.}
    \label{tab:placeholder_label}
    \begin{tabular}{lccc}
        \toprule
        & \textbf{Accuracy$\uparrow$} & \textbf{Weighted F1 Score$\uparrow$} & \textbf{Weighted Precision$\uparrow$} \\
        \midrule
        Reference              & 0.63 & 0.63 & 0.58 \\
        DAC                   & 0.60 & 0.57 & \textbf{0.57} \\
        SNAC                  & \textbf{0.63} & \textbf{0.59} & 0.57 \\
        Nvidia Codec          & 0.60 & 0.57 & \textbf{0.57} \\
        Spectral Codec             & 0.65 & 0.59 & \textbf{0.57} \\
        WavTokenizer (audio)  & 0.27 & 0.15 & 0.15 \\
        WavTokenizer (unify)  & 0.24 & 0.12 & 0.18 \\
        UniCodec              & 0.26 & 0.15 & 0.25 \\
        Our Method            & \underline{\textbf{0.57}} & \underline{\textbf{0.50}} & \underline{\textbf{0.50}} \\
        \bottomrule
    \end{tabular}
    \label{tab:genre}
\end{table}

A key property of a codec is its ability to compress and reconstruct unseen data. After augmenting the training set with hq-conversations ~\cite{magicdata2024iscslp}, we tested the model on unseen music data. The results, as reported in Table~\ref{tab:codec_metrics_music}, demonstrate that our codec generalizes well beyond AudioSet and maintains competitive perceptual and spectral quality in out-of-distribution data. 

\subsection{Downstream Audio Understanding Task Evaluation}
Unlike speech-only datasets, which can be evaluated using reconstructed waveform quality by ASR models, AudioSet contains multiple sound categories and is designed for audio event classification. We further evaluate our codec on this task. Specifically, we employ pretrained models from~\cite{dinkel2023cedconsistentensembledistillation} and compute the top-3 predicted labels using the reconstructed waveforms. To assess the codec’s ability to preserve semantic information, we report F1 and mAP scores in Table ~\ref{tab:downstream}, which measure the accuracy of sound event classification. The performance demonstrates the codec’s effectiveness in retaining discriminative detail beyond perceptual quality, achieving better downstream classification results compared to other codec baselines.

Also, we performed music genre classification using a fine-tuned version of the pre-trained model DistilHuBERT ~\cite{chang2022distilhubertspeechrepresentationlearning} on the GTZAN dataset ~\cite{tzanetakis_essl_cook_2001}. As shown in Table~\ref{tab:genre}, our method achieves the best performance among single-codebook codecs, while remaining competitive with multi-codebook codecs such as Spectral Codec. This demonstrates that our approach provides an effective representation for downstream music understanding tasks.

\newpage
\section{Limitations and Future Work}
Our experiments reveal that training MelTok is disrupted when using low-resolution data that has been artificially upsampled to a high sampling rate. Currently, open-source high-fidelity audio datasets are scarce, and even 44kHz/48kHz corpora often include upsampled 16kHz content, limiting compression performance. In future work, we plan to collect more high-resolution recordings, enhance our model for better perceptual quality, and develop suitable benchmarks for evaluating non-speech, non-music sounds.

Although our experiments demonstrate that preserving more acoustic detail in the codec is promising for audio understanding tasks, its impact on downstream generation tasks remains unknown. Due to space and computational constraints, we have focused on showcasing MelTok’s reconstruction capabilities and have not yet use tokens obtained from MelTok to train in Audio Language Models (ALM). In future work, we plan to investigate the performance of MelTok-based ALMs on downstream audio tasks.

\section{Conclusion}
In this paper, we introduce MelTok, a neural codec that compresses full-bandwidth audio into a single codebook. We achieve this by converting mel-spectrograms into discrete tokens using a 2D tokenizer, and then reconstructing high-quality audio from these tokens with a vocoder. In the process, we investigate the disentangled distributions obtained from 2D tokenization. These findings offer new insights for future research in audio codec development. Specifically: First, incorporating perceptual loss into the reconstruction stage helps preserve more structure detail in the codebook. Second, two-stage training stabilizes the overall training process, leading to faster convergence than GAN-based models. Third, using mel-spectrograms as the representation better preserves frequency details during compression.

\newpage

\bibliography{iclr2026_conference}
\bibliographystyle{iclr2026_conference}

\appendix
\section{Appendix}
\subsection{USE OF LLMs}
We only used large language models (LLMs) as a tool for language refinement and editing. They were not involved in the design of the methodology, experimental setup, data analysis, or any other core aspect of this research.

\subsection{ETHICS STATEMENT}
This study has been approved by the relevant ethics committee or institutional review board and was conducted in strict accordance with ethical guidelines. The rights, privacy, and welfare of participants were fully respected and protected, and all personal information was kept confidential.

Informed Consent: All participants were informed of the study’s objectives, procedures, potential risks, and benefits, either verbally or in writing, and provided their informed consent.

Data Confidentiality and Privacy Protection: Measures were implemented to safeguard participants’ personal information and ensure privacy.
\subsection{REPRODUCIBILITY STATEMENT}
Use of Research Data: All research data were collected, stored, and used in accordance with legal and ethical standards, ensuring transparency and proper interpretation.

We ensure that our method is fully reproducible. Upon acceptance of this paper, we will publicly release all code, model weights, and training data necessary to replicate our results.

\subsection{Equivalence of Spectrogram VGG Loss and Vocoder Feature Matching Loss}

Let $\hat{x}$ denote the waveform generated by a vocoder, and $x$ the corresponding ground-truth waveform.  
Define the Short-Time Fourier Transform (STFT) of the waveform as:
\begin{equation}
\hat{S} = \mathrm{STFT}(\hat{x}), \quad S = \mathrm{STFT}(x)
\end{equation}

\subsubsection{Spectrogram VGG Loss}

A spectrogram-based VGG loss is defined as the L1 distance between feature maps extracted from a convolutional network $\phi$ (e.g., VGG) applied to the spectrograms:
\begin{equation}
\mathcal{L}_{\mathrm{VGG}}(\hat{S}, S) = \sum_{l=1}^{L} w_l \, \| \phi_l(\hat{S}) - \phi_l(S) \|_1
\end{equation}
where $\phi_l(\cdot)$ is the feature map at the $l$-th layer, $w_l$ is a weighting coefficient, and $L$ is the total number of layers considered.

\subsubsection{Vocoder Feature Matching Loss}

In vocoder GANs, the feature matching loss is defined using the discriminator $D$:
\begin{equation}
\mathcal{L}_{\mathrm{FM}}(\hat{x}, x) = \sum_{d \in \mathcal{M}} \sum_{l=1}^{L_d} \| D_l^{(d)}(\hat{x}) - D_l^{(d)}(x) \|_1
\end{equation}
where $D_l^{(d)}(\cdot)$ denotes the feature map of the $l$-th layer of the $d$-th discriminator, and $\mathcal{D}$ is the set of discriminators (e.g., multi-resolution discriminators).  

Each discriminator first computes a spectrogram of the waveform:
\begin{equation}
X = \mathrm{STFT}(\cdot)
\end{equation}
and then applies a sequence of convolutional layers with non-linearities:
\begin{equation}
D_l^{(d)}(\hat{x}) = \sigma(W_l^{(d)} * X + b_l^{(d)}),
\end{equation}
where $W_l^{(d)}, b_l^{(d)}$ are the convolutional weights and biases, and $\sigma(\cdot)$ is the activation function.

\subsubsection{Equivalence}

Substituting $X = \mathrm{STFT}(\hat{x})$ and $X = \mathrm{STFT}(x)$ into the feature matching loss, we obtain:
\begin{equation}
\mathcal{L}_{\mathrm{FM}}(\hat{x}, x) = \sum_{d,l} \Big\| \sigma(W_l^{(d)} * \mathrm{STFT}(\hat{x}) + b_l^{(d)}) - \sigma(W_l^{(d)} * \mathrm{STFT}(x) + b_l^{(d)}) \Big\|_1
\end{equation}

Comparing with the spectrogram VGG loss in Eq.~(2), we see that the two losses share the same mathematical form:
\begin{equation}
\sum_l \| F_l(\mathrm{STFT}(\hat{x})) - F_l(\mathrm{STFT}(x)) \|_1
\end{equation}
where $F_l$ denotes a convolutional feature extractor. The only difference lies in the choice of network parameters (pretrained VGG weights vs. learned discriminator weights).  

\subsection{Equivalence of VGG Loss and Feature-Matching Loss}
\label{equivalence}
Both the spectrogram-based VGG loss and the vocoder feature matching loss are equivalent in the sense that they compute an L1 distance in the convolutional feature space of a spectrogram. Formally,
\begin{equation}
\mathcal{L}_{\mathrm{VGG}}(\hat{S}, S) \approx \mathcal{L}_{\mathrm{FM}}(\hat{x}, x),
\end{equation}
up to the network weights. This shows that the vocoder feature matching loss can be interpreted as a generalized, learnable spectrogram-based perceptual loss.

\subsection{Proof of ISTFT Lipschitz Continuity}
\label{app:istft_lipschitz}

We prove that the inverse short-time Fourier transform (ISTFT) operator is Lipschitz continuous with respect to the spectrogram input. 

\begin{lemma}[ISTFT Lipschitz Continuity]
Let $\mathcal{S}\in \mathbb{C}^{F\times T}$ be a complex spectrogram obtained by short-time Fourier transform (STFT) with analysis window $g\in\mathbb{R}^N$ and hop size $H$. Define the ISTFT operator $\mathrm{ISTFT}:\mathbb{C}^{F\times T}\to \mathbb{R}^M$ with synthesis window $h$. Then, for any two spectrograms $\mathcal{S}_1,\mathcal{S}_2$,
\[
\|\mathrm{ISTFT}(\mathcal{S}_1) - \mathrm{ISTFT}(\mathcal{S}_2)\|_2 
\;\le\; L_{\mathrm{ISTFT}} \,\|\mathcal{S}_1 - \mathcal{S}_2\|_2,
\]
where the Lipschitz constant $L_{\mathrm{ISTFT}}$ depends only on the window functions and hop size.
\end{lemma}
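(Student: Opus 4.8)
The plan is to exploit the single structural fact that makes this lemma almost immediate: with a fixed synthesis window $h$ and hop size $H$, the overlap-add reconstruction $\mathrm{ISTFT}$ is a \emph{linear} map between the finite-dimensional spaces $\mathbb{C}^{F\times T}$ and $\mathbb{R}^M$. For any linear map $A$ we have $A(\mathcal{S}_1)-A(\mathcal{S}_2)=A(\mathcal{S}_1-\mathcal{S}_2)$, so Lipschitz continuity is equivalent to boundedness, and the smallest admissible constant is exactly the operator norm $\|A\|_2$. Because the representing matrix has finitely many entries, this norm is automatically finite; the real content of the lemma is therefore to exhibit an explicit bound whose value depends only on $h$, the frame length $N$, and $H$, and not on the data $\mathcal{S}_1,\mathcal{S}_2$.

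First I would write $\mathrm{ISTFT}$ as a composition of three elementary linear operators and bound each factor in turn. (i) A framewise inverse DFT that sends each column $\mathcal{S}[:,t]\in\mathbb{C}^F$ to a time-domain frame of length $N$ (reconstructed via Hermitian symmetry for a one-sided spectrum); its operator norm is a finite constant $C_{\mathcal{F}}$ determined purely by $N$ and the chosen DFT normalization convention. (ii) Pointwise multiplication by the synthesis window, whose operator norm is $\|h\|_\infty$. (iii) The overlap-add operator that places the $T$ windowed frames at offsets $tH$ and sums the overlapping contributions.

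The one genuinely quantitative step is bounding the overlap-add operator. Let $P=\lceil N/H\rceil$, the maximum number of frames overlapping at any output sample. For windowed frames $y_t[k]$ the reconstruction is $x[n]=\sum_{t:\,0\le n-tH<N} y_t[n-tH]$, and since at most $P$ terms are nonzero for each $n$, Cauchy--Schwarz gives $|x[n]|^2\le P\sum_{t}|y_t[n-tH]|^2$; summing over $n$ and regrouping yields $\sum_n|x[n]|^2\le P\sum_{t,k}|y_t[k]|^2$, so the overlap-add map has operator norm at most $\sqrt{P}$. Composing the three bounds gives
\[
L_{\mathrm{ISTFT}}\;\le\; C_{\mathcal{F}}\,\|h\|_\infty\,\sqrt{\lceil N/H\rceil},
\]
which depends only on the window and the hop size, as claimed.

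I expect the main obstacle to be bookkeeping rather than mathematics: the overlap-add counting argument must be carried out so that the constant is provably independent of the signal — the Cauchy--Schwarz step is precisely what guarantees this — and some care is needed with the DFT normalization and the Hermitian-symmetric extension used to produce a real output, since an inconsistent convention would rescale $C_{\mathcal{F}}$ by a factor of $N$ or $\sqrt{N}$. An alternative, fully rigorous but less informative route is to invoke finite-dimensional linearity directly and set $L_{\mathrm{ISTFT}}=\|A\|_2$; I would mention this as the clean existence argument and use the explicit decomposition above to make the dependence on $h$, $N$, and $H$ transparent.
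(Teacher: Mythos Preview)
Your proposal is correct and follows the same high-level decomposition as the paper (framewise inverse DFT, window multiplication, overlap-add), but the overlap-add step is handled differently and yields a genuinely better result. The paper does not use the overlap count; instead it applies Minkowski's inequality to get $\|\Delta x\|_2 \le \|h\|_\infty \sum_t \|\mathrm{IFFT}(\Delta\mathcal{S}[:,t])\|_2$, then Cauchy--Schwarz over the frame index $t$ to bound $\sum_t \|\Delta\mathcal{S}[:,t]\|_2 \le \sqrt{T}\,\|\Delta\mathcal{S}\|_2$, arriving at $L_{\mathrm{ISTFT}} \le \|h\|_\infty\sqrt{NT}$. Your pointwise Cauchy--Schwarz over the at most $P=\lceil N/H\rceil$ overlapping frames gives $\sqrt{P}$ in place of $\sqrt{T}$. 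This is sharper whenever $T \gg N/H$ (the typical case), and---more importantly---your constant is genuinely independent of the number of frames $T$, so it actually matches the lemma's claim that $L_{\mathrm{ISTFT}}$ depends only on the window and hop size; the paper's bound, strictly speaking, does not. The paper's route is a line or two shorter, but yours is the one that delivers the lemma as stated.
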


\begin{proof}
Recall that ISTFT reconstructs the waveform by overlap-add (OLA) of inverse FFTs of each frame:
\[
\hat{x}[n] = \sum_{t} h[n-tH] \cdot \mathrm{IFFT}(\mathcal{S}[:,t])[n-tH].
\]
Let $\Delta \mathcal{S} = \mathcal{S}_1 - \mathcal{S}_2$. Then the waveform difference is
\[
\Delta x[n] = \sum_{t} h[n-tH] \cdot \mathrm{IFFT}(\Delta \mathcal{S}[:,t])[n-tH].
\]

By Parseval's theorem, the $\ell_2$ norm of the IFFT is equal to $\ell_2$ norm  of the spectrum:
\[
\|\mathrm{IFFT}(\Delta \mathcal{S}[:,t])\|_2 
= \sqrt{N}\, \|\Delta \mathcal{S}[:,t]\|_2,
\]
where $N$ is the FFT length.

Applying Minkowski’s inequality to the OLA sum:
\[
\|\Delta x\|_2
\le \sum_{t} \|h(\cdot - tH)\|_\infty \cdot \|\mathrm{IFFT}(\Delta \mathcal{S}[:,t])\|_2.
\]

Since the shifted window has the same maximum magnitude as $h$,
\[
\|\Delta x\|_2 \le \|h\|_\infty \cdot \sqrt{N}\, \sum_{t} \|\Delta \mathcal{S}[:,t]\|_2.
\]

Finally, by Cauchy–Schwarz,
\[
\sum_{t} \|\Delta \mathcal{S}[:,t]\|_2
\le \sqrt{T} \,\|\Delta \mathcal{S}\|_2.
\]

Combining the inequalities, we obtain
\[
\|\mathrm{ISTFT}(\mathcal{S}_1) - \mathrm{ISTFT}(\mathcal{S}_2)\|_2
\;\le\; \|h\|_\infty \cdot \sqrt{NT}\,\|\mathcal{S}_1 - \mathcal{S}_2\|_2.
\]

Thus ISTFT is Lipschitz continuous with constant
\[
L_{\mathrm{ISTFT}} \le \|h\|_\infty \cdot \sqrt{NT}.
\]
\end{proof}

\begin{remark}
In practice, when $h$ is chosen as the canonical synthesis window satisfying the perfect-reconstruction condition (e.g., Hann window with $50\%$ overlap), $\|h\|_\infty \le 1$. Hence the ISTFT operator has a moderate Lipschitz constant that scales with the FFT length $N$ and number of frames $T$, ensuring stability against spectrogram perturbations.
\end{remark}

\subsection{Architecture Details For Tokenization}
\label{app:architecture}

\begin{table}[h]
    \centering
    \caption{The encoder architecture details of MelTok}
    \label{tab:placeholder_label}
    \begin{tabular}{llll}
        \toprule
        Layer Name & \makecell{Input Shape \\ (Batch,Channel,Frequency,Time)} & Kernel Size, Stride & Output Shape \\
        \midrule
        Patcher & (8, 1, 96, 96) &  & (8, 1, 48, 48) \\
        PreConv2D & (8, 1, 48, 48) & (3, 1) & (8, 128, 48, 48) \\
        \midrule
        \multicolumn{4}{l}{Downsampling 1:} \\
        \quad ResnetBlock & (8, 128, 48, 48) & (3, 1), (1, 1) & (8, 256, 48, 48) \\
        \quad ResnetBlock & (8, 256, 48, 48) & (3, 1), (1, 1) & (8, 256, 48, 48) \\
        \quad DownConv2D & (8, 256, 48, 48) & (2, 2) & (8, 256, 24, 24) \\
        \quad AttnBlock (Cond.) & (8, 256, 24, 24) & (1, 1) & (8, 256, 24, 24) \\
        \midrule
        \multicolumn{4}{l}{Downsampling 2:} \\
        \quad ResnetBlock & (8, 256, 24, 24) & (3, 1), (1, 1) & (8, 512, 24, 24) \\
        \quad ResnetBlock & (8, 512, 24, 24) & (3, 1), (1, 1) & (8, 512, 24, 24) \\
        \quad DownConv2D & (8, 512, 24, 24) & (2, 2) & (8, 512, 12, 12) \\
        \quad AttnBlock (Cond.) & (8, 512, 12, 12) & (1, 1) & (8, 512, 12, 12) \\
        ResnetBlock $\times$ 2 & (8, 512, 12, 12) & (3, 1), (1, 1) & (8, 512, 12, 12) \\
        AttnBlock (Cond.) & (8, 512, 12, 12) & (1, 1) & (8, 512, 12, 12) \\
        Normalizer & (8, 512, 12, 12) &  & (8, 512, 12, 12) \\
        PostConv2D & (8, 512, 12, 12) & (3, 1) & (8, 512, 12, 12) \\
        \bottomrule
    \end{tabular}
    \label{Table:encoder_detail}
\end{table}

\begin{table}[h]
    \centering
    \caption{2D Tokenizer vs 1D Tokenizer Architecture in Ablation Study}
    \label{tab:tokenizer_comparison}
    \begin{tabular}{>{\bfseries}l l l}
        \toprule
        \bfseries & \bfseries 2D Tokenizer & \bfseries 1D Tokenizer \\
        \midrule
        Input & $96 \times 96$ & $96 \times 96$ \\
        \hline
        Encoder & $96 \times 96 \to 4 \times 48 \times 48$ (Haar, patch=2) & $96 \times 96 \to 128 \times 96$ \\
                & $4 \times 48 \times 48 \to 128 \times 48 \times 48$ & $128\times 96$ → $256\times96$ \\
                & Downsample: & Downsample: \\
                & $128 \times 48 \times 48 \to 256 \times 24 \times 24 \to 512 \times 12 \times 12$ & $256 \times 96 \to 512 \times 12$ \\
        \hline
        Pre quant conv & $512 \times 12 \times 12 \to 256 \times 12 \times 12$ & $512 \times 12 \to 256 \times 12$ \\
        \hline
        Quant\_conv & $256 \times 12 \times 12 \to 64 \times 12 \times 12$ & $256 \times 12 \to 64 \times 12$ \\
        \hline
        Quantizer & 8192 codes of dim 64 & 8192 codes of dim 64 \\
                  & One Quantizer & 12 Quantizers to keep same bitate \\
        \hline
        Post quant conv & $64 \times 12 \times 12 \to 256 \times 12 \times 12$ & $64 \times 12 \to 256 \times 12$ \\
        \hline
        Decoder & Symmetric to Encoder & Symmetric to Encoder \\
        \hline
        Reconstruction & $96 \times 96$ & $96 \times 96$ \\
        \bottomrule
    \end{tabular}
    \label{tabel: ablation_study_conv}
\end{table}

\end{document}